\newcommand{\infnorm}[1]{\left\|#1\right\|_\infty}
\date{}
 \colorlet{colorCarreUpRight}{green!30!black}
 \colorlet{colorCarreUpLeft}{green!60!black}
 \colorlet{colorCarreDownRight}{yellow!50!black}
 \colorlet{colorCarreDownLeft}{yellow!90!black}
 \colorlet{colorCarreBorder}{black}
 \colorlet{colorTroisQuart}{white}
 \colorlet{colorUnQuartBas}{red!30}
 \colorlet{colorSignalCarres}{red}
 \colorlet{colorSignalQuart}{black}
 \colorlet{colorSignalBas}{cyan}
 \tikzstyle{carreUpRight}=[pattern=north west lines,pattern color=black]
 \tikzstyle{carreDownRight}=[pattern=north west lines,pattern color=black]
 \tikzstyle{carreUpLeft}=[color=colorCarreUpLeft]
 \tikzstyle{carreDownLeft}=[color=colorCarreDownLeft]
 \tikzstyle{carreBorder}=[color=colorCarreBorder]
 \tikzstyle{troisQuart}=[color=colorTroisQuart]
 \tikzstyle{unQuartBas}=[pattern=dots]
 \tikzstyle{signalCarres}=[color=colorSignalCarres]
 \tikzstyle{signalBas}=[color=colorSignalBas]
 \tikzstyle{signalQuart}=[color=colorSignalQuart]
 \tikzstyle{bGcarreUpRight}=[fill=colorCarreUpRight]
 \tikzstyle{bGcarreUpLeft}=[color=colorCarreUpLeft]
 \tikzstyle{bGcarreDownLeft}=[color=colorCarreDownLeft]
 \tikzstyle{bGcarreDownRight}=[color=colorCarreDownRight]
 \tikzstyle{bGcarreBorder}=[color=colorCarreBorder]
 \tikzstyle{bGtroisQuart}=[color=colorTroisQuart]
 \tikzstyle{bGunQuartBas}=[color=colorUnQuartBas]
 \tikzstyle{bGsignalCarres}=[color=colorSignalCarres]
 \tikzstyle{bGsignalBas}=[color=colorSignalBas]
 \tikzstyle{bGsignalQuart}=[color=colorSignalQuart]
\newcommand{\includepicture}[1]{\includegraphics{#1.pdf}}
\newcommand{\ZZ}{\mathbb{Z}}
\newcommand{\NN}{\mathbb{N}}
\newcommand{\manyoneinf}{\leq_m}
\newtheorem{theorem}{Theorem}[section] 
\newtheorem{lemma}[theorem]{Lemma} 
\newtheorem{definition}[theorem]{Definition}  
\begin{document}

\title{Hardness of Conjugacy and Factorization of multidimensional Subshifts of Finite Type}

\author{Emmanuel Jeandel\\
LIRMM, CNRS UMR 5506 - CC 477\\
161 rue Ada,  34\,095 Montpellier Cedex 5, France\\
Emmanuel.Jeandel@lirmm.fr
\and Pascal Vanier \\
Laboratoire d'informatique fondamentale de Marseille (LIF)\\
Aix-Marseille Universit\'e, CNRS\\
39 rue Joliot-Curie, 13453 Marseille Cedex 13, FRANCE\\
Pascal.Vanier@lif.univ-mrs.fr
\thanks{Both authors are partially supported
by ANR-09-BLAN-0164}
}






\maketitle

\begin{abstract}
We investigate here the hardness of conjugacy and factorization of subshifts of finite type (SFTs)
in dimension $d>1$. In particular, we prove that the factorization problem is $\Sigma^0_3$-complete
and the conjugacy problem $\Sigma^0_1$-complete in the arithmetical hierarchy. 

\noindent \emph{Keywords:}Subshift of finite type, factorization, conjugacy, arithmetical
hierarchy, computability, tilings.
\end{abstract}

\section*{Introduction}\label{S:intro}

A $d$-dimensional
Subshift of Finite Type (SFT) is the set of colorings of $\ZZ^d$ by a finite set of colors in which
a finite set of forbidden patterns never appear. One can also see them as tilings of $\ZZ^d$, and
in particular in dimension 2, they are equivalent to the usual notion of tilings introduced by
Wang~\cite{Wan1961}. SFTs are discrete dynamical systems, and as such,
their factoring and conjugacy relations are of great importance. If two dynamical systems are
conjugate, then they exhibit the same dynamics. When $X$ factors on $Y$, then the dynamic of $Y$ is
a subdynamic of the one of $X$.

Conjugacy is an equivalence relation and separates SFTs into classes. Classifying SFTs is a long
standing open problem in dimension one~\cite{Boy2008} which has been proved decidable in the
particular case of one-sided SFTs on $\NN$, see~\cite{Wil1973}. It has been known for a long time
that the problem was undecidable when given two SFTs, since it can be reduced to the emptyness
problem which is $\Sigma^0_1$-complete~\cite{Ber1964}. However, we prove here a slightly stronger
result: even by fixing the class in advance, it is still undecidable to decide whether some given
SFT belongs to it:

\begin{theorem}\label{thm:conj}
 For any fixed $X$, given $Y$ as an input, it is $\Sigma^0_1$-complete to decide if $X$ and $Y$ are
conjugate.
\end{theorem}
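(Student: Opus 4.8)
The plan is to prove both inclusions: that the problem lies in $\Sigma^0_1$ and that it is $\Sigma^0_1$-hard.

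\textbf{Membership in $\Sigma^0_1$.} By the Curtis--Hedlund--Lyndon theorem, any conjugacy between subshifts and its inverse are sliding block codes, so $Y\cong X$ holds if and only if there exist block maps $\phi\colon X\to Y$ and $\psi\colon Y\to X$, each given by a finite radius together with a finite local rule, that are mutually inverse. I would enumerate all pairs $(\phi,\psi)$ of such finite data and, for each, semi-decide the four conditions $\phi(X)\subseteq Y$, $\psi(Y)\subseteq X$, $\psi\circ\phi=\mathrm{id}_X$ and $\phi\circ\psi=\mathrm{id}_Y$. The crucial observation is that each of these is a $\Sigma^0_1$ predicate, because it is the emptiness of an auxiliary SFT computable from $X$, $Y$, $\phi$, $\psi$. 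For instance, $\phi(X)\not\subseteq Y$ holds iff some $x\in X$ has a forbidden pattern of $Y$ in its image, which by shift-invariance may be placed at the origin; the set of such $x$ is the intersection of $X$ with the finite-type condition that the origin window of $x$ maps under $\phi$ to a forbidden pattern, hence an SFT, and SFT emptiness is $\Sigma^0_1$ (search for a box size admitting no locally valid pattern). The identity conditions are treated identically, since $\{x\in X:(\psi\circ\phi)(x)(0)\neq x(0)\}$ is an SFT. A finite conjunction of $\Sigma^0_1$ predicates is $\Sigma^0_1$, and dovetailing the semi-decision procedures over the countably many pairs $(\phi,\psi)$ keeps the whole search $\Sigma^0_1$; it halts exactly when $Y\cong X$.

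\textbf{Hardness.} I would reduce from the emptiness problem for SFTs, which is $\Sigma^0_1$-complete in dimension $d\geq 2$. Given an input SFT $Z$ over an alphabet $A$, the goal is a computable construction of $Y$ with $Y\cong X \iff Z=\emptyset$ (the case $X=\emptyset$ being trivial, since then one takes $Y=Z$). Let $B$ be the alphabet of the fixed SFT $X$, set $k=|B|+1$, and let $F_k$ denote the full shift on $k$ symbols. Define $Y$ as the disjoint union $X\sqcup(Z\times F_k)$, realized as an SFT over $B\sqcup\bigl(A\times\{0,\dots,k-1\}\bigr)$ by imposing the rules of $X$ on the $B$-part, the rules of $Z$ on the first coordinate of the other part, and forbidding any adjacency between cells of the two parts. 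Because $\ZZ^d$ is connected, every configuration lies entirely in one part, so $Y=X\sqcup(Z\times F_k)$ as a subshift, and $Z\mapsto Y$ is computable.

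For correctness I would use topological entropy $h$, a conjugacy invariant. If $Z=\emptyset$ then $Z\times F_k=\emptyset$ and $Y=X$, so $Y\cong X$. If $Z\neq\emptyset$, then $h(Y)=\max\bigl(h(X),h(Z\times F_k)\bigr)$ with $h(Z\times F_k)=h(Z)+\log k\geq\log k$; since $h(X)\leq\log|B|<\log(|B|+1)=\log k$, we obtain $h(Y)\geq\log k>h(X)$ and hence $Y\not\cong X$. Thus $Y\cong X$ iff $Z=\emptyset$, which gives the required many-one reduction.

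The main obstacle is the membership direction: checking well-definedness of a candidate block code naively quantifies over \emph{globally} admissible patterns of $X$, which is only $\Pi^0_1$ and would push the whole statement up to $\Sigma^0_3$; the decisive step is to recast each of the four conditions as emptiness of an explicitly built SFT, which brings it down to $\Sigma^0_1$. On the hardness side, the subtlety is choosing an invariant robust to aperiodic $Z$ (periodic-point counts fail, since a nonempty $Z$ may have no periodic points at all), and this is exactly why entropy, boosted above $h(X)$ by the full-shift factor $F_k$, is the right tool.
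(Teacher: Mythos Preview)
Your proof is correct and follows essentially the same approach as the paper: for membership you enumerate candidate block-code pairs and verify they compose to identities via a compactness/admissibility argument (the paper packages this with a $\star$-symbol trick that simultaneously tracks forbidden patterns, while you phrase it as four separate SFT-emptiness checks, which is the same idea modularized), and for hardness you take a disjoint union of $X$ with an entropy-boosted product to separate the conjugacy classes. The only cosmetic difference is that the paper reduces from the halting problem via Robinson's aperiodic construction, whereas you reduce directly from the SFT emptiness problem; these are equivalent since Robinson's theorem is precisely what makes SFT emptiness $\Sigma^0_1$-complete in dimension $d\geq 2$.
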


The more general problem of knowing if some SFT exhibits subdynamics of another has also been
studied intensively. In dimension one, it is only partly solved for the case when the
entropies of the two SFTs $X,Y$ verify $h(X)>h(Y)$, see~\cite{Boy1983}. Factor maps have also
been studied with the hope of finding universal SFTs: SFTs that can factor on any other and thus
contain the dynamics of all of them. However it has been shown that such SFTs do not exist,
see~\cite{Hoc2009,BT2010}. We prove here that it is harder to know if an SFT is a factor of another
than to know if it is conjugate to it.

\begin{theorem}\label{thm:fact}
 Given two SFTs $X,Y$ as inputs, it is $\Sigma^0_3$-complete to decide if $X$ factors onto $Y$.
\end{theorem}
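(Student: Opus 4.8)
\section*{Proof proposal for Theorem~\ref{thm:fact}}

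\textbf{Membership in $\Sigma^0_3$.}
I would first pin down the arithmetical complexity of the predicate. Recall that a factor map between SFTs is exactly a sliding block code, so ``$X$ factors onto $Y$'' reads $\exists f\,[\phi_f(X)\subseteq Y\ \wedge\ \phi_f(X)=Y]$, where $f$ ranges over the (countable, effectively enumerable) set of finite local rules. The quantifier $\exists f$ is therefore a genuine numerical existential. For a fixed $f$, the inclusion $\phi_f(X)\subseteq Y$ fails iff one of the finitely many $X$-patterns whose $f$-image contains a forbidden $Y$-pattern is globally admissible in $X$; since global admissibility of a finite pattern is $\Pi^0_1$ (for all $n$, it extends to a locally admissible pattern of radius $n$), this failure is $\Pi^0_1$ and the inclusion itself is $\Sigma^0_1$. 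For surjectivity I would use that a subshift is determined by its language, so $Y\subseteq\phi_f(X)$ iff $L(Y)\subseteq L(\phi_f(X))$, i.e. $\forall w\,[\,w\in L(Y)\Rightarrow w\in L(\phi_f(X))\,]$. Both $w\in L(Y)$ and $w\in L(\phi_f(X))$ are $\Pi^0_1$ (the latter is a finite disjunction over candidate $X$-preimages $p$ of ``$p$ globally admissible''), so the implication is $\Delta^0_2$ and the $\forall w$ keeps us in $\Pi^0_2$. Conjoining with the $\Sigma^0_1$ inclusion gives $\Pi^0_2$, and the outer $\exists f$ yields $\Sigma^0_3$.

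\textbf{$\Sigma^0_3$-hardness: the reduction.}
I would reduce from $\mathrm{COF}=\{e:\overline{W_e}\text{ finite}\}$, which is $\Sigma^0_3$-complete, using the equivalence $e\in\mathrm{COF}\iff \exists N\,\forall m\,[\,m\le N\ \vee\ m\in W_e\,]$. The point is to mirror this exactly: the existential $\exists N$ will be carried by the choice of block code, and the $\Pi^0_2$ surjectivity condition will carry $\forall m\,[m\le N\vee m\in W_e]$. Concretely, from $e$ I would build SFTs $X_e,Y_e$ (in dimension $d>1$) together with a family of finite local rules $\{f_N\}_{N\in\NN}$ such that: (i) $\phi_{f_N}(X_e)\subseteq Y_e$ always holds; (ii) the only patterns of $Y_e$ that matter for surjectivity are ``marker'' patterns $w_m$, one per $m\in\NN$, with $w_m\in L(Y_e)\iff m\notin W_e$ (a nontrivial $\Pi^0_1$ condition, which is what lifts the complexity from $\Sigma^0_2$ to $\Sigma^0_3$); and (iii) $w_m\in L(\phi_{f_N}(X_e))\iff m\le N$, where $X_e$ produces a $w_m$-announcing configuration for every $m$ and $f_N$ relays it to $w_m$ precisely when $m\le N$ and to a blank (always admissible) $Y_e$-configuration otherwise. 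Under (i)--(iii), surjectivity of $\phi_{f_N}$ becomes $\forall m\,[\,m\notin W_e\Rightarrow m\le N\,]$, i.e. $\overline{W_e}\subseteq\{0,\dots,N\}$; and $\exists f$ then collapses to $\exists N\,(\overline{W_e}\subseteq\{0,\dots,N\})=(e\in\mathrm{COF})$. The existential $\exists s$ of $m\in W_e=\exists s\,(m\in W_{e,s})$ is exactly the antecedent-failing case of the implication, matching the general complexity analysis above.

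\textbf{Realizing everything as genuine SFTs.}
The subshift $Y_e$ whose marker-language is the $\Pi^0_1$ set $\{w_m:m\notin W_e\}$ is only effectively closed, not of finite type, so I would invoke the standard simulation machinery for multidimensional SFTs (in the spirit of Robinson and of the Hochman / Aubrun--Sablik / Durand--Romashchenko--Shen realization theorems) to present $Y_e$ as a true finite-type object in dimension $d>1$: the visible layer carries the markers, while hidden layers run the enumeration of $W_e$ and force a forbidden pattern exactly when $m$ is enumerated, so that a configuration bearing $w_m$ is globally admissible iff $m\notin W_e$. The SFT $X_e$ is comparatively tame: its configurations announce an arbitrary $m$ and are all globally admissible, so the computation is entirely decidable there and (iii) is immediate once each $f_N$ hard-codes its threshold $N$ (a legitimate finite local rule). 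The inclusion (i) is arranged by the blank fallback.

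\textbf{Main obstacle.}
The delicate direction is the \emph{forward} implication of the reduction: I must ensure that \emph{every} successful finite-radius code $f$ (not merely the designed $\{f_N\}$) forces $\overline{W_e}$ to be finite, i.e. that no cleverer block code can cover infinitely many ``hard'' markers $w_m$ with $m\notin W_e$. This is where the construction must be made rigid: because a block code reads only a bounded window, its covering of a marker $w_m$ must come from a globally admissible $X_e$-configuration that genuinely encodes $m$, and $X_e,Y_e$ must be engineered so that the supply of such configurations, together with the bounded radius, pins any surjective code to a uniform threshold up to bounded perturbation. Getting this rigidity while keeping $X_e,Y_e$ of finite type, keeping the reduction $e\mapsto(X_e,Y_e)$ uniformly computable, and simultaneously preserving the clean $\exists N$/$\forall m$/$\exists s$ correspondence is the technical heart of the argument.
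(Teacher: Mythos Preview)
Your $\Sigma^0_3$ upper bound is correct and is essentially the paper's argument (Lemmas~\ref{lem:fact:insigma3:pi2} and~\ref{lem:fact:insigma3:sigma}): $F(X)\subseteq Y$ is $\Sigma^0_1$, $Y\subseteq F(X)$ is $\Pi^0_2$, and the outer existential over local rules yields $\Sigma^0_3$.

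For hardness you reduce from $\mathrm{COF}$, as the paper does, but your concrete scheme is inconsistent as written. You ask that $\phi_{f_N}(X_e)\subseteq Y_e$ ``always holds'' (i), that $w_m\in L(Y_e)\iff m\notin W_e$ (ii), that $w_m\in L(\phi_{f_N}(X_e))\iff m\le N$ (iii), and that $X_e$ be ``tame'' with an $m$-announcing configuration for \emph{every} $m$. Then for any $m\le N$ with $m\in W_e$, $f_N$ sends the $m$-announcer to a configuration containing $w_m$, which by (ii) lies in no point of $Y_e$; so (i) fails. Put differently, (i)+(iii) force $\{0,\dots,N\}\subseteq\overline{W_e}$, so your designed $f_N$ is a factor map only when $\overline{W_e}$ is an initial segment, and already the easy direction breaks. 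The paper avoids this by putting the computation in \emph{both} SFTs: its $X_M$ and $Y_M$ share the same $T$-structure and the same Turing run, so the $n$-configuration exists on the source side iff $n\notin W_e$, and the inclusion $F(X_M)\subseteq Y_M$ is automatic for the structure-preserving map. The extra $0/1$ decorations and the auxiliary ``red'' configuration are what create the counting mismatch, not a difference in which $n$'s are realised.

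The deeper gap is the one you yourself flag: rigidity. Invoking generic simulation machinery (Hochman, Aubrun--Sablik, DRS) produces an SFT whose language realises a prescribed $\Pi^0_1$ set, but gives no control over how an \emph{arbitrary} block code moves information around, and that control is precisely what the forward implication needs. The paper does not use simulation theorems; it uses its bespoke $T$-structure from~\cite{JV2011}, whose key feature is a unique-up-to-shift ``$\alpha$-configuration'', and proves a dedicated rigidity statement (Lemma~\ref{lem:fact:limitedshift}): any factor between $T$-structured SFTs sends $\alpha$-configurations to $\alpha$-configurations with shift bounded by the radius. Hence the image of the corner cell is determined by a fixed-size window in the preimage, i.e.\ by finitely many local patterns, and a pigeonhole on the $0/1$ corner decorations gives the contradiction when $\overline{W_e}$ is infinite. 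Your outline correctly names this as the technical heart but supplies no mechanism for it; without a rigidity lemma of this kind the hardness argument is not a proof.
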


An interesting open question for higher dimensions that would probably help solve the one
dimensional problem would be \emph{is conjugacy of subshifts with a recursive language decidable?}.
A positive answer to this question would solve the one dimensional case, even if the SFTs are
considered on $\NN^2$ instead of $\ZZ^2$.

The paper is organised as follows: first we give the necessary definitions and fix the notation is
section~\ref{S:preliminaries}, after what we give the proofs of theorems~\ref{thm:conj}
and~\ref{thm:fact} in sections~\ref{S:conjugacy}
and~\ref{S:factorization} respectively.

\section{Preliminary definitions}\label{S:preliminaries}

\subsection{Subshifts of finite type}

We give here some standard definitions and facts about multidimensional
subshifts, one may consult Lind~\cite{Lin2004} or Lind/Markus~\cite{LindMarkus} for more details.

Let $\Sigma$ be a finite alphabet, its elements are called \emph{symbols}, the $d$-dimensional
full shift on $\Sigma$ is the set
$\Sigma^{\ZZ^d}$ of all maps (colorings) from $\ZZ^d$ to the $\Sigma$ (the colors). For
$v\in\ZZ^d$, the shift functions
$\sigma_v:\Sigma^{\ZZ^d}\to\Sigma^{\ZZ^d}$, are defined locally by
$\sigma_v(c_x)=c_{x+v}$. The full shift equipped with the distance
$d(x,y)=2^{-\min\left\{\left\|v\right\|\middle\vert v\in\ZZ^d,x_v\neq
y_v\right\}}$ is a compact metric space on which the shift functions act
as homeomorphisms. An element of $\Sigma^{\ZZ^d}$ is called a
\emph{configuration}.

Every closed shift-invariant (invariant by application of any $\sigma_v$)
subset $X$ of $\Sigma^{\ZZ^d}$ is called a \emph{subshift}. An element of a
subshift is called a \emph{point} of this subshift.

Alternatively, subshifts can be defined with the help of forbidden patterns. A
\emph{pattern} is a function $p:P \to \Sigma$, where $P$, the \emph{support}, is a finite subset of
$\ZZ^d$. Let $\mathcal F$ be a collection of \emph{forbidden} patterns, the
subset $X_F$ of $\Sigma^{\ZZ^d}$ containing the configurations having nowhere a
pattern of $F$. More formally, $X_{\mathcal F}$ is defined by

$$X_{\mathcal F}=\left\{x\in\Sigma^{\ZZ^d}\middle\vert \forall z\in\ZZ^d,\forall
p\in F, x_{|z+P}\neq p \right\}\textrm{.}$$

In particular, a subshift is said to be a \emph{subshift of finite type} (SFT)
when the collection of forbidden patterns is finite. Usually, the patterns used
are \emph{blocks} or \emph{$r$-blocks}, that is they are defined over a finite
subset $P$ of $\ZZ^d$ of the form $B_r={\llbracket -r,r\rrbracket}^d$, $r$ is called its
\emph{radius}. We may assume that all patterns of $F$ are defined with blocks of the same
radius $r$, and say the SFT has radius $r$.


Given a subshift $X$, a pattern $p$ is said to be \emph{extensible} if there exists $x\in
X$ in which $p$ appears, $p$ is also said to be extensible to $x$. We also say that a pattern $p_1$
is extensible to a pattern $p_2$ if $p_1$ appears in $p_2$. A block or pattern is said to be
\emph{admissible} if it does not contain any forbidden pattern. Note that every extensible pattern is admissible but
that the converse is not necessarily true. As a matter of fact, for SFTs, it is undecidable (in
$\Pi^0_1$ to be precise) in general to know whether a pattern is extensible while
it is always decidable efficiently (polytime) to know if a pattern is admissible.

As we said before, SFTs are compact spaces, this gives a link between admissible and extensible: if
a pattern appears in an increasing sequence of admissible patterns, then it appears in a valid
configuration and is thus extensible. More generally, if we have an increasing sequence of
admissible pattern, then we can extract from it a sequence converging to some point of the SFT.

Note that instead of using the formalism of SFTs we could have used
the formalism of Wang tiles, in which numerous results have been proved. In particular the 
undecidability of knowing whether a SFT is empty. Since we will be using a construction based on
Wang tiles, we review their definitions.

\emph{Wang tiles} are unit squares with colored edges which may not be flipped
or rotated. A \emph{tileset} $T$ is a finite set of Wang tiles. A
\emph{coloring of the plane} is a mapping $c:\ZZ^2\rightarrow T$ assigning a
Wang tile to each point of the plane. If all adjacent tiles of a
 coloring of the plane have matching edges, it is called a
tiling.

 The set of tilings of a Wang tileset is a SFT on the alphabet
formed by the tiles. Conversely, any SFT is isomorphic to a Wang tileset. From a
recursivity point of view, one can say that SFTs and Wang tilesets are
equivalent. In this paper, we will be using both terminologies indiscriminately.

\subsection{Factorization and Conjugacy}

In the rest of the paper, we will use the notation $\Sigma_X$ for the alphabet
of the subshift $X$. 

A subshift $X\subseteq\Sigma_X^{\ZZ^2}$ \emph{factors} on a subshift $Y\subseteq\Sigma_Y^{\ZZ^2}$ if
there exists  a finite set $V=\{v_1,\dots,v_k\}\subset \ZZ^2$, the \emph{window}, and a local
map $f:\Sigma_X^{|V|}\to \Sigma_Y$, such that for any point $y\in Y$, there exists
a point $x\in X$ such that for all $z\in\ZZ^d, y_z=f(x_{z+v_1},\dots, x_{z+v_k})$.
That is to say, the global map $F:X\to Y$ associated to $f$, the
\emph{factorization} or \emph{factor map}, verifies $F(X)=Y$. Without loss of generality, we may
suppose that the window is an $r$-block, $r$ being then called the radius of $f$ and
$(2r+1)$ its diameter.

When the map $F$ is invertible and its inverse is also a factorization, the subshifts $X$ and
$Y$ are said to be \emph{conjugate}. In the rest of the paper, we will note with the same symbol the
local and global functions, the context making clear which one is being used.

The entropy of a subshift $X$ is defined as
\begin{displaymath}
  h(X) = \lim_{n\to\infty} \frac{\log E_n(X)}{n^d}
\end{displaymath}

\noindent where $E_n(X)$ is the number of extensible patterns of $X$ of support $\llbracket
0,n\rrbracket^d$ where $d$ is the dimension.
The entropy is a conjugacy invariant, that is to say, if $X$ and $Y$ are conjugate, then $h(X)=h(Y)$. It is in particular easy to see thanks to the
entropy that the full shift on $n$ symbols is not conjugate to the full shift with $n'$ symbols when $n\neq n'$.

\subsection{Arithmetical Hierarchy and computability}

We give now some background in computability theory and in particular about the arithmetical
hierarchy. More details can be found in Rogers~\cite{Rogers}.

In computability, the arithmetical hierarchy is a classification of sets according to their
logical characterization. A set $A$ is  $\Sigma^0_n$ if there exists a total
computable predicate $R$ such that $x\in A \Leftrightarrow \exists \overline{y_1},\forall
\overline{y_2}, \dots, Q \overline{y_n} R(x,\overline{y_1},\dots,\overline{y_n})$, where
$Q$ is a $\forall$ or an $\exists$ depending on the parity of $n$. A set $A$ is  $\Pi^0_n$ if
there exists a total computable predicate $R$ such that $x\in A \Leftrightarrow \forall
\overline{y_1},\exists
\overline{y_2}, \dots, Q \overline{y_n} R( x,\overline{y_1},\dots,\overline{y_n}$, where $Q$ is a
$\forall$ or an $\exists$ depending on the parity of $n$. Equivalently, a set is $\Sigma^0_n$
iff its complement is $\Pi^0_n$.

We say a set $A$ is many-one reducible to a set $B$, $A\manyoneinf B$ if there exists a computable
function $f$ such that for any $x$, $f(x)\in A \Leftrightarrow x\in B$. Given an enumeration of
Turing Machines $M_i$ with oracle $X$, the Turing \emph{jump} $X'$ of a set $X$ is the set of
integers $i$ such that $M_i$ halts on input $i$. We note $X^{(0)}=X$ and $X^{(n+1)}=(X^{(n)})'$. In
particular $0'$ is the set of halting Turing machines.

A set $A$ is $\Sigma^0_n$-hard (resp. $\Pi^0_n$) iff for any
$\Sigma^0_n$ (resp. $\Pi^0_n$) set $B$, $B\manyoneinf A$. The problem $0^{(n)}$ is
$\Sigma^0_n$-complete. Furthermore, it is $\Sigma^0_n$-complete if it is in $\Sigma^0_n$. The sets
in $\Sigma^0_1$ are also called recursively enumerable and the sets in $\Pi^0_1$ are called the
co-recursively enumerable or effectively closed sets.

\section{Conjugacy}\label{S:conjugacy}
We prove here the $\Sigma^0_1$-completeness of the conjugacy problem in dimension $d\geq 2$, even
for a fixed SFT.

\begin{theorem}\label{thm:conj:insigma1}
 Given two SFTs $X,Y$ as an input, it is $\Sigma^0_1$ to decide whether $X$ and $Y$ are conjugate.
\end{theorem}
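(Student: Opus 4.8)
The plan is to give a purely existential description of the conjugacy relation, writing $X\cong Y$ when $X$ and $Y$ are conjugate. By the Curtis--Hedlund--Lyndon theorem, any conjugacy $F\colon X\to Y$ and its inverse $G=F^{-1}$ are both sliding block codes, hence described by local maps $f$ and $g$ of some radii $r$ and $s$. Conversely, a pair of local maps $(f,g)$ induces a conjugacy precisely when the induced global maps $F,G$ satisfy the four conditions $F(X)\subseteq Y$, $G(Y)\subseteq X$, $G\circ F=\mathrm{id}_X$ and $F\circ G=\mathrm{id}_Y$: the last two force $F$ to be a bijection with two-sided inverse $G$, and surjectivity of $F$ onto $Y$ already follows from $F\circ G=\mathrm{id}_Y$ since every $y\in Y$ equals $F(G(y))$ with $G(y)\in X$. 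As local maps are finite objects, I would record
\[
X\cong Y\iff \exists (r,s,f,g)\ \big[\,F(X)\subseteq Y\ \wedge\ G(Y)\subseteq X\ \wedge\ G\circ F=\mathrm{id}_X\ \wedge\ F\circ G=\mathrm{id}_Y\,\big],
\]
and enumerate the candidate quadruples $(r,s,f,g)$ effectively. It then remains to show that, for a fixed candidate, the conjunction of the four conditions is $\Sigma^0_1$; merging this existential into the enumeration keeps the whole predicate in $\Sigma^0_1$.

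The main point is that each of the four conditions, although phrased as a statement about all points of $X$ or $Y$, reduces to the non-extensibility of a finite, computable set of patterns. Consider $G\circ F=\mathrm{id}_X$: the composite $H=G\circ F$ is itself a local map, with window $B_{r+s}$, so $H(x)_0$ depends only on $x_{|B_{r+s}}$. By shift-invariance, $H(x)=x$ for all $x\in X$ is equivalent to requiring $h(p)=p_0$ for every \emph{extensible} pattern $p$ of support $B_{r+s}$, where $h$ is the local rule of $H$. Negating, the condition fails exactly when some extensible $p$ satisfies $h(p)\neq p_0$. The patterns $p$ with $h(p)\neq p_0$ form a decidable finite set, so the condition holds iff each of these finitely many patterns is non-extensible in $X$. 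The conditions $F(X)\subseteq Y$ and $G(Y)\subseteq X$ are handled identically: $F(X)\subseteq Y$ fails iff some extensible $X$-pattern of support $B_{r+r_Y}$ (with $r_Y$ the radius of $Y$) is sent by $f$ to a block containing a forbidden pattern of $Y$, and the set of such ``bad'' source patterns is again finite and decidable, so the condition again amounts to the non-extensibility of each of them.

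Finally I would invoke compactness, already recalled in Section~\ref{S:preliminaries}, to observe that non-extensibility is recursively enumerable: a pattern $p$ is extensible iff it extends to an admissible pattern of support $B_n$ for every $n$, a $\Pi^0_1$ property, so $p$ is non-extensible iff there is some $n$ for which no admissible extension of support $B_n$ exists, which is $\Sigma^0_1$ since admissibility of a given block is decidable. Hence each of the four conditions is a finite conjunction of $\Sigma^0_1$ statements, the bracketed predicate is $\Sigma^0_1$, and absorbing it into the enumeration over $(r,s,f,g)$ shows that conjugacy is $\Sigma^0_1$. The step I expect to be the crux is exactly this reformulation: read naively, ``$h(p)=p_0$ for all extensible $p$'' quantifies universally over a $\Pi^0_1$ set of patterns and would only place the condition in $\Pi^0_2$; the gain comes from dualizing to the assertion that finitely many explicitly listed patterns are non-extensible, which is merely recursively enumerable.
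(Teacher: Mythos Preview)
Your proof is correct and follows essentially the same route as the paper: quantify existentially over a pair of local maps, then use compactness to reduce the remaining conditions (that the compositions are identities and that the images land in the right subshift) to decidable checks on admissible blocks of bounded size. The only cosmetic difference is that the paper bundles these checks by introducing an auxiliary symbol~$\star$ that records forbidden patterns---so that a single test ``$G'\circ F'(b)=b$ at the origin'' simultaneously witnesses $F(X)\subseteq Y$ and $G\circ F=\mathrm{id}_X$---whereas you keep the four conditions separate and phrase each as the non-extensibility of an explicit finite list of bad patterns.
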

\begin{proof}
To decide whether $X$ and $Y$ are conjugate, we have to check if there exists two local functions
 $F:\Sigma_X\to \Sigma_Y$, and $G:\Sigma_Y\to \Sigma_X$ such that $F\circ G=id_X$ and $G\circ
F=id_Y$. To do this, we define two new functions $F', G'$:
\begin{itemize}
 \item Let $r_X$ and $r_F$ be the radiuses of $X$ and $F$, then
$F':(\Sigma_X\sqcup\{\star\})^{B_{r_{F'}}}\to
\Sigma_Y\sqcup\{\star\} $ is defined with a window of radius $r_{F'}=\max(r_F,r_X)$:
\begin{displaymath}
F' = \left\{
\begin{array}{ll}
F & \textrm{if the window does not contain a forbidden pattern or a }\star \\
\star & \textrm{otherwise}\\
\end{array}\right.
\end{displaymath}

\item Let $r_Y$ and $r_G$ be the radiuses of $Y$ and $G$, then
$G':(\Sigma_Y\sqcup\{\star\})^{B_{r_{G'}}}\to
\Sigma_X\sqcup\{\star\} $ is defined with a window of radius $r_{G'}=\max(r_F,r_X)$:
\begin{displaymath}
G' = \left\{
\begin{array}{ll}
G & \textrm{if the window does not contain a forbidden pattern or a }\star \\
\star & \textrm{otherwise}\\
\end{array}\right.
\end{displaymath}
\end{itemize}

This definition may look gruesome, but is actually simple: $F'$ and $G'$ act exactly as $F$ and
$G$, except they keep track of forbidden patterns. That is, if for some pattern $p$, $F\circ G(p)$
does not contain a $\star$, then $G(p)$ does not contain a forbidden pattern. This will allow us to
prove at the same time that $F(X)\subseteq Y$ (resp. $G(Y)\subseteq X$) and that $F\circ G=id_X$
(resp. $G\circ F=id_Y$).

We want to prove that
$X$ and $Y$ are conjugate if and only if \emph{there exists $F,G$ and $k>(r_{F'}+r_{G'})$ such that
for any admissible block $b$ of size $k$ of $X$ (resp. $Y$), $F'\circ G'(b)$ (resp. $G'\circ F'(b)$)
is equal to $b$ at 0.}

We prove this by contraposition for $F'\circ G'$, the proof is identical for the other composition.
Suppose there is a configuration $x\in X$ such that $F'\circ G'(x)\neq x$, we may suppose they
differ in $0$ by shifting $x$. So for any $k$ there is an admissible block $b$ of size $k$ such that
$F'\circ G'(b)$ differs from $b$ at 0.
Conversely, if there is a sequence of blocks $b_k$ of size $k$, for $k>(r_{F'}+r_{G'})$ such
that $F'\circ G'(b_k)$ differs from $b_k$ in 0, then by compactness we can extract a sequence
converging to some configuration $x\in X$, and by construction $F'\circ G'(x)$ differs from $x$ in
0.
\qed\end{proof}

\begin{theorem}\label{thm:conj:sigma1hard}
 For any $X$, given $Y$ as an input, it is $\Sigma^0_1$-hard to decide if $X$ and $Y$ are conjugate.
\end{theorem}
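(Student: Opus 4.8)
The plan is to reduce the emptiness problem for SFTs, which is $\Sigma^0_1$-complete, to the problem of deciding whether an input $Y$ is conjugate to the fixed SFT $X$. From a tileset $T$ I will compute an SFT $Y$ such that $Y$ is conjugate to $X$ if and only if $T$ admits no tiling; since the reduction is computable, this proves $\Sigma^0_1$-hardness. In contrast to the easy case where $X$ is a full shift, the real issue is to force $Y$ \emph{not} to be conjugate to an \emph{arbitrary} fixed $X$ as soon as $T$ does tile, and I will do this uniformly through the entropy, which is a conjugacy invariant.

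Given $T$, I first build an auxiliary SFT $Y_T$ whose configurations are pairs consisting of a legal $T$-tiling on one layer together with a completely unconstrained colouring by $N=|\Sigma_X|+1$ symbols on a second layer. This is a genuine SFT, described by finitely many forbidden patterns and computable from $T$. By construction $Y_T$ is empty exactly when $T$ has no tiling, and whenever $Y_T$ is nonempty the free layer alone forces $h(Y_T)\geq \log N>\log|\Sigma_X|\geq h(X)$, the last inequality holding because any SFT over the alphabet $\Sigma_X$ has entropy at most $\log|\Sigma_X|$. I then set $Y=X\sqcup Y_T$, the disjoint union over the disjoint alphabets $\Sigma_X$ and $\Sigma_{Y_T}$: its forbidden patterns are those of $X$, those of $Y_T$, and every two-cell pattern carrying symbols from both alphabets. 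Since $\ZZ^d$ is connected, each configuration of $Y$ lies entirely in $X$ or entirely in $Y_T$, and the map $T\mapsto Y$ is computable.

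To conclude, I verify the equivalence. If $T$ has no tiling then $Y_T=\emptyset$ and $Y=X$, so $Y$ is conjugate to $X$. If $T$ has a tiling then $Y_T\neq\emptyset$; because every extensible block of a disjoint union is extensible in exactly one of the two components, $E_n(Y)=E_n(X)+E_n(Y_T)$, hence $h(Y)=\max(h(X),h(Y_T))=h(Y_T)>h(X)$. As entropy is a conjugacy invariant, $Y$ and $X$ are not conjugate. Thus $Y$ is conjugate to $X$ if and only if $T$ admits no tiling, which is the desired reduction. The delicate point, and the only place where the general $X$ matters, is this last step: the entropy gap created by the $N$-symbol free layer is what rules out conjugacy for every fixed $X$ and not merely for full shifts, and the sole thing to check there is that $N=|\Sigma_X|+1$ is a bona fide constant attached to $X$, which it is.
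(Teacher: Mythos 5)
Your proof is correct and essentially identical to the paper's: both take the disjoint union of $X$ with the product of a possibly-empty SFT and a full shift on $|\Sigma_X|+1$ symbols, then separate the two cases via the entropy invariant. The only cosmetic difference is that you reduce directly from SFT emptiness (Berger), whereas the paper reduces from the halting problem by instantiating the empty-iff-halting SFT via Robinson's construction.
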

\begin{proof}
We reduce the problem to $0'$, the halting problem. Given a Turing machine $M$ we construct a SFT
$Y_M$ such that $Y_M$ is conjugate to $X$ iff $M$ halts.

Let $R_M$ be Robinson's SFT~\cite{Rob1971} encoding computations of $M$: $R_M$ is empty iff $M$
halts\footnote{Robinson's SFT is in dimension 2 of course, for higher dimensions, we take the
rules that the symbol in $x\pm e_i$ equals the symbol in $x$, for $i>2$.}.

Now take the full shift on one more symbol than $X$, note it $F$. $Y_M$ is now the disjoint
union of $X$ and $R_M\times F$.

If $M$ halts, $Y_M=X$ and hence is conjugate to $X$. In the other direction, suppose $M$ does not
halt, then $R_M\times F$ has entropy strictly greater than that of $X$ and hence $Y_M$ is not
conjugate to $X$.
\qed\end{proof}

\section{Factorization}\label{S:factorization}

We start with two small examples to see why factorization is more complex than conjugacy. Here the
examples are the simplest ones possible: we fix the SFT to which we factor in a very simple way,
thus making the factor map known in advance.

\begin{theorem}
 Let $Y$ be the SFT containing exactly one configuration, a uniform configuration. Given $X$ as an input, it is $\Pi^0_1$-complete to know whether $X$ factors onto $Y$.
\end{theorem}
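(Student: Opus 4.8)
The plan is to reduce this factorization question to the emptiness problem for SFTs, which the introduction has already identified as $\Sigma^0_1$-complete. Write $Y = \{a^{\ZZ^d}\}$ for the unique uniform configuration on some symbol $a$. The key claim I would establish is that $X$ factors onto $Y$ if and only if $X$ is nonempty.

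For the easy direction, assume $X \neq \emptyset$ and take the constant local map $f \equiv a$ with a window of radius $0$. The associated global map $F$ sends every configuration of $X$ to $a^{\ZZ^d}$, so its image is exactly $\{a^{\ZZ^d}\} = Y$ as soon as $X$ contains at least one point; hence $F$ is a factor map onto $Y$. Conversely, if $X = \emptyset$ then for any candidate local map the image $F(X)$ is empty, and so it cannot equal the nonempty $Y$; thus $X$ does not factor onto $Y$. The only clause of the factor-map definition that deserves attention is the surjectivity requirement $F(X) = Y$, but since $Y$ consists of a single point this is automatic whenever $X$ is nonempty.

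It then remains to pin down the complexity of nonemptiness. As recalled in the introduction, emptiness of SFTs is $\Sigma^0_1$-complete via Berger's theorem~\cite{Ber1964}: it is semi-decided by searching for a radius $n$ for which no admissible block of support $\llbracket -n,n\rrbracket^d$ exists (this certifies emptiness by compactness), and it is $\Sigma^0_1$-hard through the domino problem. Passing to complements, nonemptiness is $\Pi^0_1$-complete, membership being the $\forall n$ statement ``there exists an admissible $n$-block'' over a decidable predicate. Combining this with the equivalence established above yields that ``$X$ factors onto $Y$'' is $\Pi^0_1$-complete.

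I do not expect a genuine obstacle here: the entire content lies in the observation that a single uniform target point can always be reached by the constant map, collapsing factorization onto $Y$ to the nonemptiness of $X$. The two points worth checking carefully are that the constant map truly satisfies the surjectivity part of the factor-map definition, and that nonemptiness lands exactly in $\Pi^0_1$ rather than higher in the hierarchy; both are immediate from compactness and the finiteness of the forbidden-pattern set.
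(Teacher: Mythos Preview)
Your proposal is correct and follows essentially the same approach as the paper: both reduce the question to the nonemptiness of $X$ (the paper phrases this by noting that the factor map is forced, since $\Sigma_Y$ has a single symbol, and then invokes the $\Pi^0_1$-completeness of nonemptiness). Your write-up is simply a more explicit version of the same one-line argument.
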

\begin{proof}
 In this case the factor map is forced: it has to send everything to the only symbol of $\Sigma_Y$.
And
the problem is hence equivalent to knowing whether a SFT is \emph{not} empty, which is
$\Pi^0_1$-complete.
\end{proof}

\begin{theorem}
 Let $Y$ be the empty SFT. Given $X$ as an input, it is $\Sigma^0_1$-complete to know whether $X$ factors onto $Y$.
\end{theorem}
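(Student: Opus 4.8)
The plan is to reduce the whole question to the emptiness problem for SFTs, which is classically $\Sigma^0_1$-complete and is already exploited via Robinson's construction in the proof of Theorem~\ref{thm:conj:sigma1hard}. The crucial first observation is that when $Y$ is the empty SFT the requirement $F(X)=Y$ degenerates completely: a factor map is by definition a global map $F\colon X\to Y$, so any $x\in X$ would produce a point $F(x)\in Y$; since $Y=\emptyset$ has no points, no such $x$ can exist. Conversely, if $X=\emptyset$ then the (vacuous) local map witnesses $F(X)=\emptyset=Y$. Hence \emph{$X$ factors onto the empty SFT if and only if $X$ is itself empty}, and the problem is exactly the emptiness problem.

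For membership in $\Sigma^0_1$, I would invoke the compactness principle recalled in Section~\ref{S:preliminaries}: an SFT is nonempty iff it admits an admissible block of every radius, because an increasing sequence of admissible blocks converges to a genuine point. Contrapositively, $X$ is empty iff there is some radius $r$ for which \emph{no} admissible block of radius $r$ exists. As admissibility of a fixed block is decidable (in fact polynomial time), this is a predicate of the shape $\exists r\,R(X,r)$ with $R$ computable, which is the defining form of a $\Sigma^0_1$ set.

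For $\Sigma^0_1$-hardness I would reuse the reduction from the halting set $0'$ appearing in Theorem~\ref{thm:conj:sigma1hard}: from a Turing machine $M$ one computes Robinson's SFT $R_M$, which is empty exactly when $M$ halts. Since $M\mapsto R_M$ is computable and $R_M$ factors onto the empty SFT iff $R_M$ is empty iff $M$ halts, the halting problem many-one reduces to our problem. Combined with the membership argument, this yields $\Sigma^0_1$-completeness.

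The only point requiring care -- rather than a genuine obstacle -- is the opening reduction: one must argue from the typing $F\colon X\to Y$ (equivalently, from the containment $F(X)\subseteq Y$ implicit in $F(X)=Y$) that no factor map out of a nonempty $X$ can exist, instead of being misled by the vacuously-true surjectivity clause ``for any $y\in Y$''. Once the equivalence with emptiness is pinned down, both directions are entirely standard.
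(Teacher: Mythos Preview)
Your proposal is correct and follows exactly the same route as the paper: both reduce the question to the emptiness problem for SFTs, which is $\Sigma^0_1$-complete. The paper's proof is a single sentence (``any factor map is suitable, the problem is equivalent to knowing whether $X$ is empty''), whereas you spell out the equivalence, the membership argument via compactness, and the hardness reduction via $R_M$ in detail; but the underlying idea is identical.
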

\begin{proof}
 Here any factor map is suitable, the problem is equivalent to knowing whether $X$ is empty, which
is $\Sigma^0_1$-complete.
\end{proof}

We study now the hardness of factorization  in the general case, that is to say when two SFTs are
given as inputs and we 
want to know whether one is a factor of the other. We prove here with
theorems~\ref{thm:fact:insigma3} and~\ref{thm:fact:sigma3hard} the
$\Sigma^0_3$-completeness of the factorization problem.

\subsection{Factorization is in $\Sigma^0_3$}

\begin{theorem}\label{thm:fact:insigma3}
 Given two SFTs $X,Y$ as an input, deciding whether $X$ factors onto $Y$ is in $\Sigma^0_3$.
\end{theorem}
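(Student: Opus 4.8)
The plan is to express "$X$ factors onto $Y$" as a $\Sigma^0_3$ formula by quantifying over the local map and then verifying the two defining properties of a factor map using appropriate quantifier complexity. Recall that a factor map is given by a finite window $V$ and a finite local function $f:\Sigma_X^{|V|}\to\Sigma_Y$; since both are finite objects, there are only countably many of them and we may enumerate them effectively. Thus the outermost quantifier is an existential over a finite object, contributing a $\exists$ at the front. It then remains to show that, for a fixed candidate $f$, the predicate ``$f$ induces a well-defined factor map from $X$ \emph{onto} $Y$'' lies in $\Pi^0_2$, so that the whole statement sits in $\exists\,\Pi^0_2 = \Sigma^0_3$.

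There are two conditions to verify for a fixed $f$. \emph{Well-definedness}, namely $f(X)\subseteq Y$: for every point $x\in X$ the image $F(x)$ must avoid all forbidden patterns of $Y$. First I would reduce this to a statement about admissible blocks: $f$ maps $X$ into $Y$ iff for every admissible block $b$ of $X$ of radius large enough that $f$ is defined on its interior, the image $f(b)$ contains no forbidden pattern of $Y$. Since admissibility of a block is decidable (it is a polytime check, as noted in the preliminaries) and the presence of a forbidden pattern is decidable, this is a $\Pi^0_1$ condition (a $\forall$ over blocks of a decidable predicate), using the compactness argument from Theorem~\ref{thm:conj:insigma1} to pass between configurations and increasing sequences of admissible blocks.

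The genuinely harder condition is \emph{surjectivity}, $F(X)=Y$. Here I would phrase it as: for every point $y\in Y$ there exists a point $x\in X$ with $F(x)=y$. The inner ``there exists $x\in X$ with $F(x)=y$'' is itself not a simple quantifier over integers but over a configuration, so the main obstacle is to express this existential over an infinite object with controlled complexity. The key idea is again compactness together with König's-lemma-style reasoning: $y$ has an $F$-preimage in $X$ iff for every radius $n$ there is an admissible block of $X$ whose $f$-image agrees with $y$ on the central $n$-block. One should check that the tree of partial preimages is finitely branching and recursively presentable, so that having an infinite branch (an actual preimage) is equivalent to having arbitrarily deep finite approximations; this turns ``$\exists x\in X,\ F(x)=y$'' into a $\Pi^0_1$ condition \emph{relative to $y$}. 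Quantifying universally over all $y\in Y$ — themselves specifiable via their increasing admissible blocks — then yields a $\forall\exists\forall$ pattern that collapses to $\Pi^0_2$ after the decidable inner matrix is isolated. Combining the $\Pi^0_1$ well-definedness condition with the $\Pi^0_2$ surjectivity condition keeps us in $\Pi^0_2$ for fixed $f$, and the outer existential over the finitely describable map $f$ gives the desired $\Sigma^0_3$ bound.
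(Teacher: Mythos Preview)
Your overall architecture matches the paper's: an outer existential over the (finitely describable) local rule $f$, then show that ``$F(X)\subseteq Y$'' and ``$Y\subseteq F(X)$'' both sit inside $\Pi^0_2$. The paper proves exactly these two facts as Lemmas~\ref{lem:fact:insigma3:sigma} and~\ref{lem:fact:insigma3:pi2}.

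There is, however, a genuine error in your treatment of well-definedness. You claim that $F(X)\subseteq Y$ is equivalent to ``for every admissible block $b$ of $X$ (of sufficiently large radius), $f(b)$ contains no forbidden pattern of $Y$'', and conclude this is $\Pi^0_1$. That equivalence is false in dimension $d\ge 2$: admissible blocks need not be extensible, so there may well be an admissible block $b$ whose image contains a forbidden pattern of $Y$ even though every actual point of $X$ maps into $Y$. (Concretely, take $X$ to be an empty SFT with many locally admissible blocks, $Y$ any nontrivial SFT, and $f$ arbitrary: then $F(X)=\emptyset\subseteq Y$ but your condition typically fails.) The compactness argument you invoke actually yields the \emph{contrapositive} direction and hence a $\Sigma^0_1$ statement, exactly as in the paper: $F(X)\subseteq Y$ iff there exists a radius $r$ such that for every admissible $r$-block $M$ of $X$, $f(M)$ has no forbidden pattern of $Y$ at its center. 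This is Lemma~\ref{lem:fact:insigma3:sigma}. Since $\Sigma^0_1\subseteq\Pi^0_2$, the slip does not affect the final $\Sigma^0_3$ bound, but the stated $\Pi^0_1$ characterization is incorrect and would, if used as written, cause you to reject legitimate factor maps.

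Your surjectivity argument is essentially right, though the paper's formulation is cleaner because it avoids quantifying over infinite configurations $y\in Y$ altogether: instead of ``for every $y\in Y$, for every $n$, some admissible preimage'', the paper writes ``for every extensible pattern $m$ of $Y$, $F^{-1}(m)$ contains an admissible pattern of $X$'', which is directly $\Pi^0_2$ since non-extensibility is $\Sigma^0_1$ (Lemma~\ref{lem:fact:insigma3:pi2}). Your version unwinds to the same thing once you replace ``$y\in Y$'' by a quantification over its finite central patterns, but that step deserves to be made explicit.
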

\begin{proof}
$X$ factors onto $Y$ iff there exists a factor map $F$, a local function, such that $F(X)=Y$. This
is the first existential quantifier.The
result follows from the two next lemmas.
\qed\end{proof}

\begin{lemma}\label{lem:fact:insigma3:pi2}
Given $F,X,Y$ as an input, deciding if $Y \subseteq F(X)$ is $\Pi^0_2$.
\end{lemma}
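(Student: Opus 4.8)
The plan is to rewrite the inclusion $Y \subseteq F(X)$ as a statement about finite patterns, using compactness to trade the quantification over infinite configurations for a quantification over blocks. Since $F(X)$ is the image of the compact set $X$ under the continuous (sliding block) map $F$, it is a closed subshift, and I first want the following local characterization of membership: writing $r$ for the radius of $F$, a point $y$ lies in $F(X)$ if and only if for every $n$ there is an \emph{admissible} block $b$ of $X$ on support $B_{n+r}$ with $F(b) = y_{|B_n}$. The forward direction is immediate by restricting a preimage $x$ of $y$. For the converse I would take, for each $n$, such a block $b_n$, extend it arbitrarily to a full configuration, and extract a convergent subsequence in the compact full shift; the limit $x$ has all of its finite sub-blocks admissible, hence belongs to $X$ since $X$ is an SFT, and $F(x) = y$ because $F$ is continuous and $F(b_n)$ agrees with $y$ on $B_n$.

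With this in hand, $Y \subseteq F(X)$ becomes ``for every $y \in Y$ and every $n$, the block $y_{|B_n}$ is the $F$-image of an admissible block of $X$.'' The next step is to eliminate the quantifier over the infinitely many points $y$: the condition only depends on which blocks $q$ on $B_n$ actually occur in some point of $Y$, i.e. on the \emph{extensible} blocks of $Y$. Hence
\[
Y \subseteq F(X) \iff \forall n\, \forall q\ \bigl[\, q \text{ extensible in } Y \;\Rightarrow\; \exists\, b \text{ admissible in } X,\ F(b) = q \,\bigr].
\]
I would then read off the complexity. For fixed $n$ and $q$ the candidate blocks $b$ range over a finite, explicitly bounded set of blocks on $B_{n+r}$, and ``$b$ admissible and $F(b)=q$'' is decidable, so the conclusion ``$\exists b$'' is decidable. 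By compactness ``$q$ extensible in $Y$'' is equivalent to ``for every $m$ there is an admissible block of $Y$ on $B_m$ extending $q$'', a $\Pi^0_1$ predicate. An implication whose hypothesis is $\Pi^0_1$ and whose conclusion is decidable is $\Sigma^0_1$; prefixing the two universal integer quantifiers $\forall n\,\forall q$ (collapsed into one by pairing) yields a $\Pi^0_2$ formula, as claimed.

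The step I expect to be the most delicate is the passage from ``all finite blocks lift'' to ``the whole point lifts'', i.e. the compactness extraction producing a genuine preimage $x \in X$ with $F(x) = y$; this is where the hypothesis that $X$ is of \emph{finite} type is used, so that a configuration all of whose blocks are admissible is automatically a point of $X$. The only other point requiring care is that the inner quantifier must range over \emph{extensible} rather than merely admissible blocks of $Y$: admissible blocks that extend to no configuration impose no constraint, so replacing ``extensible'' by the decidable predicate ``admissible'' would give an inequivalent (too strong) condition. It is precisely this use of extensibility that contributes the $\Pi^0_1$ layer and pins the bound at $\Pi^0_2$.
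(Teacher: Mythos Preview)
Your proposal is correct and follows essentially the same route as the paper: both reduce $Y\subseteq F(X)$ to the statement that every extensible pattern of $Y$ has an admissible $F$-preimage in $X$, use compactness to pass from admissible preimages of growing blocks to an actual preimage in $X$, and then read off the $\Pi^0_2$ bound from the fact that extensibility is $\Pi^0_1$ while the existence of an admissible preimage is a decidable finite check. Your write-up is in fact more explicit than the paper's on why the inner existential is decidable and why extensibility (rather than admissibility) of $q$ is what must appear in the hypothesis.
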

\begin{proof}
We prove here that the statement $Y\subseteq F(X)$, that is to say, \emph{for every point $y\in Y$,
there exists a point $x\in X$ such that $F(x)=y$}, is equivalent to the following $\Pi^0_2$
statement: \emph{for any admissible pattern $m$ of $Y$, if $m$ is extensible, then $F^{-1}(m)$
contains an admissible pattern}. This statement is $\Pi^0_2$ since checking that $m$ is \emph{not}
extensible is $\Sigma^0_1$, that is to say: there exists a radius $r$ such that all $r$-blocks
containing $m$ are not admissible.

We now prove the equivalence. Suppose that $Y\subseteq F(X)$, then any extensible pattern $m$ of $Y$
appears in a configuration $y\in Y$ which has a preimage $x\in X$. A preimage of $m$ being
extensible, it is also admissible. This proves the first direction.

Conversely, suppose all extensible patterns $m$ of $Y$ have an admissible preimage. Let $y$ be a
point of $Y$,
then we have an increasing sequence $m_i$ of extensible patterns converging to $y$. All of them have
at least one admissible preimage $m'_i$. By compactness, we can extract from this sequence a
converging subsequence, note $x$ its limit. By construction $x$ is a point of $X$ and a preimage of
$y$.

\qed\end{proof}

\begin{lemma}\label{lem:fact:insigma3:sigma}
Given $F,X,Y$ as an input, deciding if $F(X)\subseteq Y$ is $\Sigma^0_1$.
\end{lemma}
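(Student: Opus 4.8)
The plan is to show that the complementary statement $F(X) \not\subseteq Y$ is $\Pi^0_1$, so that $F(X) \subseteq Y$ itself is $\Sigma^0_1$; in fact I will exhibit a semi-decision procedure for the latter directly. First I would reformulate the condition locally. Since $Y$ is an SFT, it is given by a finite set of forbidden patterns $p_1,\dots,p_m$ of some radius $r_Y$, and $F$ has some radius $r_F$. A point $x\in X$ witnesses $F(x)\notin Y$ exactly when $F(x)$ contains some $p_j$ somewhere; by shift-invariance of both $X$ and $F$, we may assume such an occurrence is located at the origin. Thus $F(X)\not\subseteq Y$ if and only if there exist $x\in X$ and an index $j$ such that $F(x)$ displays $p_j$ at the origin.

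Next I would package this as an emptiness question for a finitely constrained set. For each $j$, let $W_j$ be the set of configurations $x\in X$ such that $F(x)$ restricted to the support of $p_j$ equals $p_j$. Each constraint---avoiding the forbidden patterns of $X$, together with producing $p_j$ at the origin under $F$---is local and of bounded radius, so $W_j$ is defined by finitely many local conditions; in particular it is closed. The reformulation above then reads: $F(X)\subseteq Y$ if and only if $W_j=\emptyset$ for every $j$.

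The crux is a compactness argument that makes this emptiness semi-decidable. For $r\geq r_F+r_Y$, call a block $b$ of radius $r$ a \emph{candidate} (for $j$) if it is admissible in $X$ and $F(b)$ displays $p_j$ at the origin; both conditions are checkable in finite time. I claim $W_j=\emptyset$ if and only if there is some $r$ admitting no candidate of radius $r$. Indeed, if candidates exist at every radius, then, exactly as in the compactness arguments already used above, we extract from an increasing sequence of candidates a limit configuration $x\in X$ with $F(x)$ displaying $p_j$ at the origin, so $x\in W_j$; conversely, any point of $W_j$ restricts to candidates of every radius. This is where the admissible-versus-extensible distinction is handled for free: compactness promotes a tower of admissible candidate blocks into an actual point of $X$, so we never need to decide extensibility.

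Finishing, the semi-decision procedure for $F(X)\subseteq Y$ enumerates $r=r_F+r_Y, r_F+r_Y+1,\dots$, at each stage testing the finitely many radius-$r$ blocks against the candidate condition for each $j$, and halts and accepts as soon as it reaches an $r$ for which no candidate exists for any $j$. By the claim this procedure halts precisely when every $W_j$ is empty, that is, precisely when $F(X)\subseteq Y$, which witnesses that the problem is $\Sigma^0_1$. The only delicate point is the compactness step justifying the reduction to a single finite radius; the remaining bookkeeping with the radii $r_F$ and $r_Y$ is routine.
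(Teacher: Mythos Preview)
Your argument is correct and follows essentially the same route as the paper: both reduce $F(X)\subseteq Y$ to the $\Sigma^0_1$ statement that some radius $r$ exists for which no admissible $r$-block of $X$ maps under $F$ to a block containing a forbidden pattern of $Y$ at its center, and both justify this equivalence by the same compactness argument promoting a tower of admissible ``bad'' blocks to an actual point of $X$. Your version is simply more explicit about the bookkeeping (separating by forbidden pattern $p_j$, spelling out the semi-decision procedure), but the underlying idea is identical.
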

\begin{proof}
It is clear that $F(X) \subseteq Y$ if and only if $F(X)$ does not contain any configuration where
a forbidden patterns of $Y$ appears.
We now show that this is equivalent to the following $\Sigma^0_1$ statement: \emph{there exists 
a radius $r$ such
that for any admissible $r$-block $M$ of $X$, $F(M)$ does not contain any forbidden pattern
in its center.}

We prove the result by contraposition, in both directions. Suppose there is a configuration $x\in
X$ such that $F(x)$ contains a forbidden pattern. Then for any radius $r>0$, there exists an
extensible, hence admissible, pattern $M$ of size $r$ such that $F(M)$ contains a forbidden pattern
in its center.

Conversely, if for any radius $r>0$, there exists an admissible pattern $M$ of $X$ of size $r$ such
that $F(M)$ contains a forbidden pattern in its center, then by compactness, there exists a
configuration $x\in X$ such that $F(x)$ contains a forbidden pattern in its center. 
\qed\end{proof}

\subsection{Factorization is $\Sigma^0_3$-hard}

To prove the hardness, we use the base construction that we introduced in~\cite{JV2011}: we note it $T$.
This construction introduces a new way to put Turing machine computations in SFTs, in particular,
the base construction has exactly one point (up to shift) in which computations may be
encoded. We call this point \emph{configuration $\alpha$}, its schematic view is shown in
figure~\ref{fig:construction}a. The computation is encoded in the inner grid which is
sparse. Each crossing between a horizontal line and a vertical one forms a \emph{cell}. The constraints are carried along
the vertical and horizontal lines, so that we may view the encoding of the Turing machine as a tiling on the grid. For each time step, 
the tape of the Turing machine is encoded in the NW-SE
diagonals and the size of the diagonal steadily increases in size when going north-east. At each
growth of the diagonal size, it gains two cells. 

Configuration $\alpha$ is made of two layers: one producing the horizontal lines and the other the
vertical ones. The layer producing the vertical lines is shown in figure~\ref{fig:alphavertlines},
the vertical lines are the black vertical lines. The configuration producing the horizontal lines
is its exact symetric along the south-west/north-east diagonal. The key property of these layers is
that when a corner tile (the tile in the lower left corner of the first square) appears, then the point is necessarily of this form.

In the original construction, corner tiles of the horizontal and vertical layers could only be superimposed to each
other. We just change this so that instead, the corner tile of the vertical layer has
to be at position $(1,-1)$ relative to the corner tile of the horizontal one. This change does not
impact any of the properties of $T$, but simplifies a bit the proof of
lemma~\ref{lem:fact:limitedshift}.

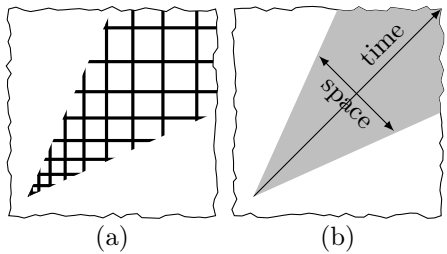
\begin{figure}[H]
 \begin{center}
 \begin{tikzpicture}[scale=0.025]
  \def\lx{10};
  \def\linewidth{1.2pt};
  \def\n{30};
  \begin{scope}[shift={(0,0)}]
  \begin{scope}
    \clip[draw,decorate,decoration={random steps, segment length=3pt, amplitude=1pt}] (-10,-10)
rectangle (100,100);
    \begin{scope}
      \clip (0,0) -- (\lx*\lx*2,\lx*\lx-\lx) -- (\lx*\lx-\lx,\lx*\lx*2) -- cycle;
      \foreach \i in {0,...,\lx}{
        \def\x{\i*\i-\i};
        \draw[line width=\linewidth] (0,\x) -- (\i*\i*2,\x);
        \draw[line width=\linewidth] (\x,0) -- (\x,\i*\i*2);
      }
    \end{scope}
  \end{scope}
  \node[below] at (45,-10) {(a)};
  \end{scope}
  \begin{scope}[shift={(120,0)}]
  \begin{scope}[even odd rule]
    \clip[draw,decorate,decoration={random steps, segment length=3pt, amplitude=1pt}] (-10,-10)
rectangle (100,100);
    \begin{scope}
      \clip (0,0) -- (\lx*\lx*2,\lx*\lx-\lx) -- (\lx*\lx-\lx,\lx*\lx*2) -- cycle;
      \fill[color=lightgray] (-10,-10) rectangle (110,110);
      \draw[-latex]  (0,0) -- (\lx*\lx,\lx*\lx) node[near end,above,sloped] {time};
     \draw[latex-latex] (\n+5,45+\n) -- (45+\n,5+\n) node[midway,below,sloped] {space};
    \end{scope}
  \end{scope}
  \node[below] at (45,-10) {(b)};
  \end{scope}

 \end{tikzpicture}
 \end{center}
 \caption{$(a)$ The skeleton of configuration $\alpha$. $(b)$ How the computation is superimposed to
$\alpha$.}
 \label{fig:construction}
\end{figure}

\begin{figure}[H]
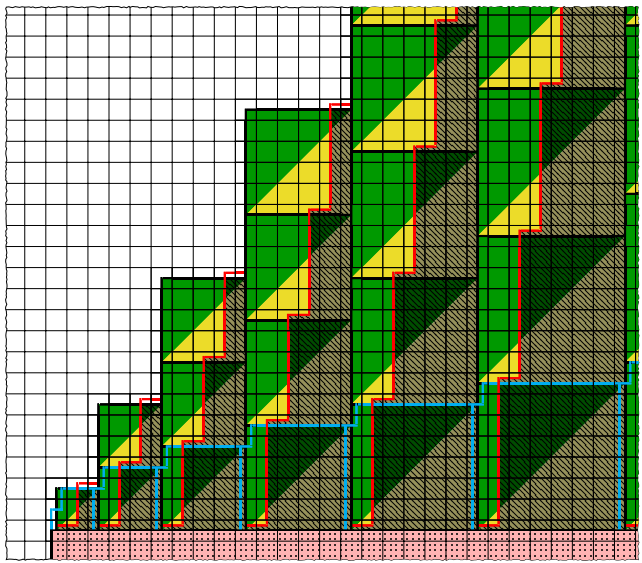

 \begin{center}
 \scalebox{0.7}{\includepicture{figures/tmtiling}}
 \end{center}
 \caption{\label{fig:alphavertlines}Point $\alpha$, the meaningful point of $X_T$. The corner tile
may be seen on the first non all-white column: it is the lower left corner of the square.}
\end{figure}

Our reduction will use two SFTs based on this construction, both of them will be feature a different
tiling on its grid. We will say that an SFT which is basically $T$ with a tiling on
its grid as having $T$-structure.

\begin{definition}[$T$-structure]
We say an SFT $X$ has \emph{$T$-structure} it is a copy of $T$ to which we superimposed new symbols
only on the symbols representing the horizontal/vertical lines and their crossings.
\end{definition}

Note that an SFT may have $T$-structure while having no $\alpha$-configuration: for instance if you
put a computation of a Turing machine that produces an error whenever it halts.

The next lemma states a very intuitive result, that will be used later, namely that if an SFT with $T$-structure factors to another one, 
then the structure of each point is preserved by factorization. Furthermore, it shows that the factor map can only send a cell to its corresponding one, 
that is to say cell of the preimage has to be in the window of the image.

\begin{lemma}\label{lem:fact:limitedshift}
Let $X,Y$ be two SFTs with $T$-structure, such that $X$ factors onto $Y$. Let $r$ be the radius of
the factorization, then any $\alpha$-configuration of $Y$ is factored on by an
$\alpha$-configuration of $X$, and can be shifted by $v$ iff $\infnorm{v}\le r$.
\end{lemma}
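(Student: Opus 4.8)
The plan is to reduce the entire statement to one structural feature of the base construction $T$: a point of $X$ (or of $Y$) carries a corner tile if and only if it is, up to shift, the configuration $\alpha$. I would prove two things. First, that the factorization $F$ sends every corner-free point of $X$ to a corner-free point of $Y$; by the defining property of $T$ this immediately forces the preimage of any $\alpha$-configuration of $Y$ to be an $\alpha$-configuration of $X$, giving the first half of the lemma. Second, that the corner symbol of $Y$ can only be produced by $F$ from a window of $X$ that already contains the corner tile of $X$; this localizes the correspondence and yields the displacement bound $\infnorm{v}\le r$.

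For the first point, the structural input I rely on is that every corner-free point of $X$ is a limit of configurations $\sigma_{v_n}(\alpha_X)$ with $\infnorm{v_n}\to\infty$, i.e. the corner of $\alpha_X$ pushed off to infinity; this is exactly the form of the limit points produced by the Robinson-type skeleton of $T$. Since $F$ is continuous and commutes with the shift, for such a point $x=\lim \sigma_{v_n}(\alpha_X)$ one has $F(x)=\lim \sigma_{v_n}\bigl(F(\alpha_X)\bigr)$; the shifts $\sigma_{v_n}$ drive any occurrence of a corner tile of $F(\alpha_X)$ (if it has one) off to infinity, so the limit $F(x)$ carries no corner tile. Hence $F$ maps corner-free points to corner-free points. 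Taking the contrapositive: if $y$ is an $\alpha$-configuration of $Y$ and $F(x)=y$ for some preimage $x$ (which exists because $F(X)=Y$), then $x$ cannot be corner-free, so it carries a corner tile and is, by the property of $T$, an $\alpha$-configuration of $X$.

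For the displacement, suppose the corner tile of $y$ sits at position $p$; its symbol is $F$ applied to the window $p+B_r$ of $x$. If that window did not contain the corner tile of $x$, then $x_{|p+B_r}$ would be a generic, corner-free block of $\alpha_X$, and by quasi-periodicity of the skeleton this same block also occurs, arbitrarily far from the corner, inside some corner-free point $x'$ of $X$. Placing that occurrence at the origin makes $F(x')$ display the corner symbol of $Y$, hence a corner tile, contradicting the first point. Thus the corner tile of $x$ lies inside $p+B_r$, i.e. the displacement $v$ between the corner tiles of $x$ and of $y$ satisfies $\infnorm{v}\le r$. Running this argument for both the horizontal and the vertical corner tiles, and using that in either SFT the vertical corner lies at $(1,-1)$ from the horizontal one, the two displacements are forced to coincide: writing $v_h=q_h-p_h$ and $v_v=q_v-p_v$ with $q_v-q_h=p_v-p_h=(1,-1)$ gives $v_h=v_v$, so a single $v$ with $\infnorm{v}\le r$ governs the whole correspondence, which is precisely the statement that each cell of the preimage lies in the window of its image. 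The reverse implication is immediate, since a radius-$r$ map can displace a cell by any $v$ with $\infnorm{v}\le r$, so every such $v$ does occur.

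The main obstacle is the structural fact feeding both halves, namely that the corner-free points of a $T$-structured SFT are exactly the far-shift limits of $\alpha$, and that every corner-free block of $\alpha$ recurs arbitrarily far from the corner; this is where the details of $T$ are genuinely used. Everything else — continuity of $F$, the shift-commutation $F\circ\sigma_v=\sigma_v\circ F$, and the ``corner tile $\Rightarrow$ $\alpha$'' property — enters purely formally, and the $(1,-1)$ offset between the two corner tiles is exactly what removes the last ambiguity by pinning the horizontal and vertical displacements to a common $v$.
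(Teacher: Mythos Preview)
Your approach is structurally different from the paper's, and there is a genuine gap in the second half.

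For the displacement bound you argue: if the window $p+B_r$ of $x$ misses the corner of $x$, then $x_{|p+B_r}$ is a corner-free block of $\alpha_X$, and ``by quasi-periodicity of the skeleton this same block also occurs \dots\ inside some corner-free point $x'$ of $X$''. This step fails. The skeleton of $\alpha$ is \emph{not} quasi-periodic: the spacing between consecutive vertical (or horizontal) lines grows without bound as one moves away from the corner, so a block containing two nearby lines occurs only in a bounded region of $\alpha$ and nowhere else. More decisively, the paper records (via the cited lemma from~\cite{JV2011}) that every non-$\alpha$ configuration carries at most one vertical and one horizontal line. Hence a corner-free block of $\alpha$ that happens to contain two vertical lines --- and such blocks exist arbitrarily close to, but not containing, the corner, e.g.\ a window of radius $r\ge 2$ centred just above the corner --- can \emph{never} occur in a corner-free point $x'$. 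Your recurrence claim is therefore false exactly in the regime where the contradiction is needed. The paper avoids this by a direct geometric argument: for large $k$ there are uniform $(k-1)\times(k-1)$ squares inside $\alpha$ at explicit positions, and if the shift exceeded $r$ the window of $F$ at a line position of $y$ would sit entirely inside such a uniform square of $x$, forcing $F$ to output identical symbols at two horizontally adjacent cells where $y$ has a line symbol next to a non-line symbol.

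Your first half also rests on an unproved structural premise: that every corner-free point of $X$ lies in the orbit closure of a \emph{single} $\alpha_X$. A $T$-structured SFT may have uncountably many $\alpha$-configurations (carrying different computations or decorations), and a corner-free point with a single decorated line need not be a limit of shifts of any one of them; if you allow the $\alpha$-configuration to vary with $n$, you lose control of where $F$ sends its corner, and the argument becomes circular with part two. The paper's route is again more direct: non-$\alpha$ configurations have uniform quarter- and eighth-planes, a local map sends uniform regions to uniform regions, and the north-east eighth-planes of $\alpha$ are not uniform --- so no non-$\alpha$ point can factor onto an $\alpha$-configuration. This uses only the coarse uniformity structure and no recurrence or orbit-closure statement.
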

\begin{proof}
By \cite[lemma 1]{JV2011}, we know that non-$\alpha$-configurations have at most one vertical line
and one horizontal line. And therefore that they have two uniform (same symbols everywhere)
quarter-planes and four uniform eighth-planes, as seen on figure~\ref{fig:planes}. The two north
east eighth-planes are not uniform in $\alpha$. thus they cannot factor on $\alpha$.

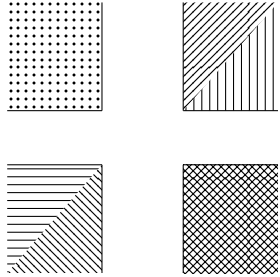
\begin{figure}
\centering
\begin{tikzpicture}[scale=0.18]
 \draw (-3,10) -- (-3,2) (-3,-2) -- (-3,-10);
 \draw (3,10) -- (3,2) (3,-2) -- (3,-10);
 \draw (-10,-2) -- (-3,-2) (3,-2) -- (10,-2);
 \draw (-10,2) -- (-3,2) (3,2) -- (10,2);
 \fill[pattern=dots] (-3,2) rectangle (-10,10);
 \fill[pattern=crosshatch] (3,-2) rectangle (10,-10);
 \fill[pattern=north east lines] (3,2) -- (10,10) -- (3,10) -- cycle;
 \fill[pattern=vertical lines] (3,2) -- (10,10) -- (10,2) -- cycle;
 \fill[pattern=north west lines] (-3,-2) -- (-10,-10) -- (-3,-10) -- cycle;
 \fill[pattern=horizontal lines] (-3,-2) -- (-10,-10) -- (-10,-2) -- cycle;
\end{tikzpicture}
\caption{\label{fig:planes} Uniform quarter- and eighth-planes in
non-$\alpha$-configurations.}
\end{figure}

It remains to prove the second part: that in the factorization process the $\alpha$-structure is at
most shifted by the radius of the factorization. We do that by contradiction, suppose that an
$\alpha$-configurations $x$ of $X$ factors on an $\alpha$-configuration $y$ of $Y$ and shifts it by
$v=(v_x,v_y)$, with $\infnorm{v}>r$. Without loss of generality we
may suppose that $v_x>r$ and $v_y>0$ and that the vertical and horizontal corner tiles of the
preimage are at positions $(0,1)$ and $(1,0)$ respectively. We are now going to show that this is
not possible.

On the horizontal layer, for all $k\in\NN^*$ there is a square with lower left corner
at $(2k^2+k,2k^2+k)$, see figure~\ref{fig:alphavertlines}. Inside this
square, there are two $(k-1)\times(k-1)$ uniform smaller squares, see
figure~\ref{fig:uniformparts}. This being also true for the vertical layer, these squares remain
uniform when they are superimposed. Now take $k$ such that $k>(\infnorm{v}+2r+1)$. By hypothesis,
there is a vertical line symbol $t$ at $z_p=(2k^2+2k+1,2k^2+k)$ on $x$, and thus at
$z_i=(2k^2+2k+1+v_x,2k^2+k+v_y)$ on $y$. We know $x_ {|z_i+B_r}$ has image $t$, and by what
precedes that $x_{|z_i+B_r}=x_{|z_i+(1,0)+B_r}$ since they are both uniform, therefore, there
should be two $t$ symbols next to eachother in $y$ at $z_i$ and $z_i+(1,0)$. This is impossible.

\begin{figure}[H]
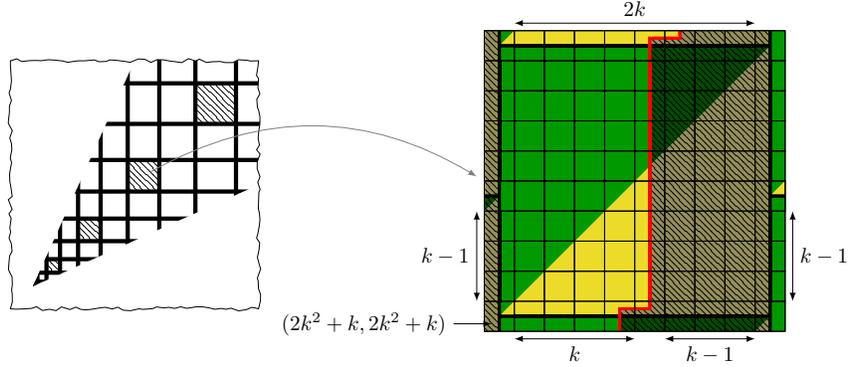

\centering
\scalebox{0.8}{\includepicture{figures/uniform}}
\caption{\label{fig:uniformparts}For every $k\in\NN^*$, the square starting at position $(2k^2+k,2k
^2+k)$ is of the form on the right on the component producing the vertical lines (and is the
symetric along the diagonal for the one producing the horizontal lines). We can see that there are
two uniform $(k-1)\times (k-1)$ squares at $(2k^2+2k+2,2k^2+k+1)$ and $(2k^2+k+1,2k^2+2k+2)$
respectively.}
\end{figure}
\qed\end{proof}

\begin{theorem}\label{thm:fact:sigma3hard}
 Given two SFTs $X,Y$ as an input, deciding whether $X$ factors onto $Y$ is $\Sigma^0_3$-hard.
\end{theorem}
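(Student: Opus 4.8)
The plan is to reduce a $\Sigma^0_3$-complete set to the factorization problem, building on the decomposition of factorization established in Theorem~\ref{thm:fact:insigma3}: the statement \emph{$X$ factors onto $Y$} is $\exists F\,[\,Y\subseteq F(X)\ \wedge\ F(X)\subseteq Y\,]$, where by Lemma~\ref{lem:fact:insigma3:pi2} the first conjunct is $\Pi^0_2$ and by Lemma~\ref{lem:fact:insigma3:sigma} the second is $\Sigma^0_1$. A natural $\Sigma^0_3$-complete source is the set of indices $e$ whose domain $W_e$ is cofinite, i.e. $\exists n\,\forall m\ge n\,\exists t\,[\,M_e\text{ halts on }m\text{ in }t\text{ steps}\,]$; I want a computable map $e\mapsto(X_e,Y_e)$ with $X_e$ factoring onto $Y_e$ iff $W_e$ is cofinite, matching the three quantifiers $\exists n$, $\forall m$, $\exists t$ to $\exists F$, the ``$\forall$'' of $Y\subseteq F(X)$, and its ``$\exists$'' respectively.

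Concretely, both $X_e$ and $Y_e$ will have $T$-structure. On $X_e$ the grid of configuration $\alpha$ runs an honest dovetailed simulation of $M_e$ on all inputs: the tiling rules force each cell indexed by $m$ to carry a \emph{done} marker exactly when $M_e$ has halted on $m$, and the halting computation is physically recorded in the neighbourhood of that cell, so that a \emph{done} marker can never appear without the genuine computation witnessing it. The subshift $Y_e$ is the more permissive $T$-structure whose $\alpha$-configuration simply demands a \emph{done} marker at \emph{every} cell. By Lemma~\ref{lem:fact:limitedshift}, any factor map $F$ from $X_e$ onto $Y_e$ must send $\alpha$-configurations to $\alpha$-configurations cell-to-cell, shifting by at most its radius $r$; hence $F$ acts on the grid essentially locally, reading an $r$-window around each cell to produce the corresponding cell of $Y_e$. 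I will arrange $Y_e$ to be permissive enough that $F(X_e)\subseteq Y_e$ holds automatically, so that the whole reduction rests on the $\Pi^0_2$ condition $Y_e\subseteq F(X_e)$: by Lemma~\ref{lem:fact:insigma3:pi2} this says that every extensible pattern of $Y_e$ has an admissible $F$-preimage in $X_e$, and as the patterns grow this is exactly ``for every cell $m$ there is a halting computation (some time $t$) realizing the required \emph{done} marker''.

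The outer quantifier $\exists n$ is to be produced by the freedom in the radius of $F$. For the finitely many cells $m$ whose location lies within distance $r$ of the distinguished corner tile, an $r$-window can \emph{see} the corner, recognize the exact index $m$, and output a \emph{done} marker unconditionally; these cells are therefore ``exempt'' and require no halting of $M_e$. For cells farther out the window sees only the self-similar grid and no corner, so the only way for $F$ to produce the required \emph{done} marker at cell $m$ is to read it off the genuine record carried by $X_e$, which exists only if $M_e$ halts on $m$. Since the set of exempt cells grows unboundedly with $r$ but is finite for each fixed $F$, ranging over all factor maps realizes exactly $\exists n\,\forall m\ge n$, while the admissible-preimage search over larger and larger regions supplies the innermost $\exists t$. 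This yields both directions: if $W_e$ is cofinite with threshold $n$, I build an explicit $F$ whose radius is large enough to exempt the cells below $n$ and to transmit the (now always present) records above $n$; conversely a factoring $F$ of radius $r$ forces $M_e$ to halt on every cell beyond its exempt zone, so $W_e$ is cofinite.

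The main obstacle is ruling out ``cheating'' factor maps: I must ensure that a bounded-radius $F$ cannot fabricate a \emph{done} marker at a far cell on which $M_e$ does not halt. This is where the honest-recording discipline of $X_e$ together with Lemma~\ref{lem:fact:limitedshift} do the real work, since the halting record has unbounded size while $F$ only reads a fixed window, so a far-out marker cannot be manufactured locally and must be inherited from a genuine halt in $X_e$. Secondary points to verify, all routine given the structure, are that the degenerate (non-$\alpha$) configurations of $Y_e$ identified via \cite[lemma~1]{JV2011} and Figure~\ref{fig:planes} all receive preimages, that $F(X_e)\subseteq Y_e$ indeed holds, and that the correspondence between the radius of $F$ and the threshold $n$ is monotone and cofinal so that the two existential quantifiers match precisely.
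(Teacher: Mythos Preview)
Your construction admits a trivial factor map that defeats the reduction. Take $F$ to copy the $T$-structure verbatim and to output a \emph{done} marker at every grid cell, irrespective of what $X_e$ carries there. Your $Y_e$ has a single $\alpha$-configuration (all cells \emph{done}) and your $X_e$ has a single $\alpha$-configuration (the deterministic dovetailed run); $F$ sends the latter to the former, so $Y_e\subseteq F(X_e)$, and since $Y_e$ imposes no further constraints, $F(X_e)\subseteq Y_e$ as well. This $F$ works for every index $e$, so both directions of your equivalence fail when $W_e$ is not cofinite. The paragraph where you try to rule out ``cheating'' maps conflates two different constraints: the rule that in $X_e$ a \emph{done} marker must be witnessed by a recorded halting computation is a constraint on \emph{points of $X_e$}, not on \emph{outputs of $F$}. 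A local map is under no obligation to verify anything before writing a symbol of $\Sigma_{Y_e}$; it is just a function on blocks. The ``unbounded size of the halting record'' is therefore irrelevant, since $F$ never needs to read it.

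The paper's argument is structurally different: it is a pigeonhole argument, not a local-honesty argument. Both $X_M$ and $Y_M$ are arranged to have \emph{many} $\alpha$-configurations, one for each input $n$ on which $M$ does not halt; $Y_M$ additionally carries a binary decoration on its \emph{corner tile}, while in $X_M$ the matching decoration lives only on the blue computation zone, which for large $n$ sits arbitrarily far from the corner. When $M$ fails to halt on infinitely many inputs, Lemma~\ref{lem:fact:limitedshift} pins the preimage corner within the window of the image corner; but that window sees only finitely many distinct $X_M$-patterns once the blue zone is out of range, so $F$ can realise only finitely many corner decorations among infinitely many required targets. Your setup, with a unique $\alpha$-configuration on each side, has no room for such a counting obstruction; that is the missing idea.
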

\newcommand{\cofin}{\textbf{COFINITE}\xspace}
For this proof, we will reduce to the problem \cofin, which is known to be
$\Sigma^0_3$-complete, see Kozen~\cite{Kozen}. \cofin is the set of Turing
machines which run infinitely only on a finite set of inputs.

\begin{proof}

\begin{figure}[H]
\centering
 \begin{tikzpicture}[scale=0.025]
  \def\lx{10};
  \def\linewidth{1.2pt};
  \def\n{30};
  \begin{scope}[even odd rule]
    \clip[draw,decorate,decoration={random steps, segment length=3pt, amplitude=1pt}] (-10,-10)
rectangle (100,100);
    \begin{scope}
      \clip[draw] (0,0) -- (\lx*\lx*2,\lx*\lx-\lx) --
(\lx*\lx-\lx,\lx*\lx*2) -- cycle;
      \node (corner) at (0,0) {};
      \node (end1) at (\lx*\lx*2,\lx*\lx-\lx) {};
      \node (end2) at (\lx*\lx-\lx,\lx*\lx*2) {};
      \node (input1) at (\n,50+\n) {};
      \node (input2) at (50+\n,\n) {};
      \fill[color=blue!30] (\n,50+\n) -- (50+\n,\n) -- (200,\n) -- (\n,200) -- cycle;
      \draw[-latex,decorate,decoration={snake,amplitude=2mm,segment length=5mm,pre length=4mm,post
length=3mm}]
      (intersection cs:
	first line={(input1) -- (input2)},
	second line={(corner) -- (end2)}) -- (\lx*\lx,\lx*\lx); 
      \draw[latex-latex] (0,0) --  (25+\n,25+\n) node[midway,above,sloped,near end] {$d>n$};
      \draw[line width=3pt,color=black] (\n,50+\n) -- (50+\n,\n) node[midway,right]
{$n$};
      \draw[dotted, line width=1.2pt] (intersection cs:
	first line={(input1) -- (input2)},
	second line={(corner) -- (end1)})-- +(100,100);
      \draw[dotted, line width=1.2pt] (intersection cs:
	first line={(input1) -- (input2)},
	second line={(corner) -- (end2)}) -- +(100,100);
    \end{scope}
  \end{scope}
 \end{tikzpicture}
 \caption{\label{fig:zm} Computation on input $n$ in the SFT $Z$, the number of white diagonals $d$
preceeding the computation is strigtly greater than the input $n$.}
\end{figure}
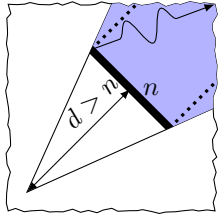

Given a Turing machine $M$, we construct two SFTs $X_M$ and $Y_M$ such
that $X_M$ factors on $Y_M$ iff the set of inputs on which $M$ does not halt is
finite. We first introduce an SFT $Z_M$ on which both will be based. It will of course have $T$
structure. Above the $T$ base, we allow the cells of the grid to be either white or blue according
to
the following rules:
\begin{itemize}
 \item All cells on a NW-SE diagonal are of the same color.
 \item A blue diagonal may follow (along direction SW-NE) a white diagonal, but not the contrary.
 \item A transition from white diagonal to blue may only appear when the grid grows.
\end{itemize}

We now allow computation on blue cells only. Only the diagonals after the growth of the grid may
contain computations. The Turing machine $M$ is launched on the input formed by the size of the
first blue line (in number of cells of course). We forbid the machine to halt

So for each $n$ on which $M$ does not halt, there is a configuration with white cells until the
first blue diagonal appears, then computation occurs inside the blue cone, see
figure~\ref{fig:zm} for a schematic view. If $M$ halts on $n$, then there is no tiling
where the first blue line codes $n$. By compactness, there is of course a configuration with only
white diagonals. If $M$ is total, then the only $\alpha$-configuration in $Z_M$ is the one with only
white diagonals.

Now from $Z_M$, we can give $X_M$ and $Y_M$:
\begin{itemize}
 \item $X_M$: Let $Z'_M$ be a copy of $Z_M$ to which we add two decorations $0$ and $1$ on the
blue cells only, and all blue cells in a configuration must have the same decoration. Now $X_M$
is $Z'_M$ to which we add a third color, red, that may only appear alone, instead of white and
blue. No computation is superimposed on red.
 \item $Y_M$ is a copy of $Z_M$ where we decorated only the horizontal corner tile with two symbols
$0$ and $1$.
\end{itemize}

We now check that $X_M$ factors onto $Y_M$ iff $M$ does not halt on a finite set of inputs:
\begin{itemize}
 \item[$\Rightarrow$] Suppose $M$ does not halt on a finite set of inputs: there exists $N$ such
that $M$ halts on every input greater than $N$. The following factor map $F$ works:
\begin{itemize}
 \item $F$ is the identity on $Z_M$. Note that the additional copy of $T$ is also sent onto $Z_M$.
 \item $F$ has a radius big enough so that when its window is centered on the corner tile, it would
cover the beginning of the computation on input $N$.
 \item An $\alpha$-configuration $x$ of $X_M$ is sent on the same $\alpha$-configuration $y$ in
$Y_M$. For the decorations, when there is a computation on $x$, the factor map can see
it and gives the same decoration to the corner tile of $y$. When there is no computation, the
factor map doesn't see a computation zone and gives decoration 0 to the corner tile. The
configuration with only white diagonals and decoration 1 of $Y_M$ is factored on by the
$\alpha$-configuration colored in red contained in $X_M$.
\end{itemize}
Note that this also works when $M$ is total.
 \item[$\Leftarrow$] Conversely, suppose $M$ does not halt on an infinite set of inputs, and that
there exists a factor map $F$ with radius $r$: lemma~\ref{lem:fact:limitedshift} states that all
$\alpha$-configurations of $Y_M$ are factored on by $\alpha$-configurations of $X_M$.
Now, there is an infinite number of $\alpha$-configurations with corner tile decorated with 0 (resp.
1) in $Y_M$, they all must be factored on by some $\alpha$-configuration of $X_M$. Still by
lemma~\ref{lem:fact:limitedshift}, the corner tile of the preimage must be in the window of the
corner tile of the image. However, there can only be a finite number of configurations in which the
symbols in this window differ. So the $\alpha$-configurations of $X_M$ factor to a finite number of
$\alpha$-configurations of $Y_M$ with one of the decorations. This is impossible.
\end{itemize}

Note that the construction of $X_M$ and $Y_M$ from the description of $M$ is computable and
uniform. The reduction is thus many-one. 
\qed\end{proof}

\bibliographystyle{alpha}
\bibliography{books,biblio}
\end{document}